\newtheorem{theorem}{Theorem}
\newtheorem{lemma}[theorem]{Lemma}
\newtheorem{proposition}[theorem]{Proposition}
\newtheorem{remark}[theorem]{Remark}
\newtheorem{assumption}[theorem]{Assumption}
\title{Dual Rate Control for Security in Cyber-physical Systems }
\author{Mohammad Naghnaeian, Nabil Hirzallah and Petros G. Voulgaris \thanks{%
M. Naghnaeian is a PhD candidate with the Mechanical Science and Engineering
Department, University of Illinois, Urbana, IL, USA \texttt{\small %
naghnae2@illinois.edu}} \thanks{%
N. Hirzallah is a PhD candidate with the Electrical and Computer Engineering
Department, University of Illinois, Urbana, IL, USA \texttt{\small %
hirzall2@illinois.edu}}\thanks{%
P. G. Voulgaris is with the Aerospace Engineering Department and the
Coordinated Science Laboratory, University of Illinois, Urbana, IL, USA
\texttt{\small voulgari@illinois.edu}} \thanks{%
This work was supported in part by the National Science Foundation under NSF
Award NSF ECCS 10-27437 and AFOSR under award AF FA 9550-12-1-0193}}
\begin{document}

\maketitle
\thispagestyle{empty}
\pagestyle{empty}

\begin{abstract}

We consider malicious attacks on actuators and sensors of a feedback system which can be modeled as additive, possibly unbounded, disturbances at the digital (cyber) part of the feedback loop.  We precisely characterize the role of the unstable poles and zeros of the system in the ability to detect stealthy attacks in the context of the sampled data implementation of the controller  in feedback with the continuous (physical) plant.  We show that, if there is a single sensor that is guaranteed to be secure and the plant is observable from that sensor, then there exist a class of multirate sampled data controllers that ensure that all attacks remain detectable.  These dual rate controllers are sampling the output faster than the zero order hold rate that operates on the control input and as such, they can even provide better nominal performance than single rate, at the price of higher sampling of the continuous output.

\end{abstract}
\section{Introduction}

Security of cyber-physical systems has caught a lot of attention lately. Recent papers along with successful attacks on critical infrastructure together revealed many vulnerabilities in the practiced methods of control. For instance, \cite{Ning} showed that if a hacker can access the cyber-space of the power grid, then it is easy for him to change the power state estimates without being detected by the traditional bad data detection methods provided that he knows the grid configuration. This led to many research papers investigating the security of the state estimates and suggesting protective measures in addition to investigating attacks on the actuators and/or the plant itself. For example, \cite{sandberg1}, \cite{sandberg2} introduce security indices which quantify the minimum effort needed to change the state estimates without triggering bad-data detectors with perfect and imperfect knowledge of the system as constraints. In \cite{tabuada} the authors considered attacks on control system measurements that are not necessarily bounded or a follow a certain distribution and without prior knowledge of the system. They show that it is impossible to reconstruct the states of the system if more than half of the sensors are attacked, generalizing some earlier results in \cite{hadj}. However, an NP-hard problem has to be solved to detect the attacks. In \cite{sinopoli,newsino}, the authors inject a signal (unknown to the attacker) into the system to detect replay attacks at the expense of increasing the cost of the LQG controller. However, if the plant has an unstable zero then it can be shown that an undetectable attack can still be designed. In \cite{bullo}, the authors suggest the use of dynamic filters that continuously monitor the states of the system at every instance of time. However, the filters have a serious limitation in that they cannot detect zero dynamics attacks. In \cite{sandberg3}, the authors investigate the class of zero dynamics attacks and suggest adding extra sensors or even perturbing the plant by adding extra connections to remove the unstable zeros. However this may not always be feasible in practice.

In this paper we focus on attacks on actuators and sensors, represented as additive and unbounded disturbances on the digital (i.e., ``cyber") part of the controlled system.  We examine from an input-output perspective the exact conditions under which such attacks can be stealthy, which brings up the pivotal role of unstable zeros and poles of the open loop, continuous time, physical plant.  A key point that the paper brings is the sampled-data (SD) nature of a controlled cyberphysical system which consists of the continuous physical dynamics and the digital controller.  The importance of the SD nature lies in the fact that typically, to ensure good intersample behavior, the rate of the sample and hold mechanism has to be high enough. It is known however that high sampling rate can lead to unstable zeros in the discrete plant dynamics.   In particular, even if a continuous LTI plant $P_c$  has no unstable zeros, its discrete representation $P_d$ obtained by the sampled and hold operations will introduce unstable zeros if the relative degree of $P_c$ is greater than three (e.g., \cite{chenfranbook}) and the sampling period $T\rightarrow0$ (see Figure \ref{SDfig}.)  Therefore, a SD implementation of the controller may create additional vulnerability to stealthy attacks and so, it is important to have ways that secure the safety of the system while achieving the required performance.  As one such way, we propose a dual rate sampling approach, a special case of multirate sampling (MR), whereby the output is sampled at a multiple of the hold rate.

Multirate sampling has been studied extensively in the context of sampled-data control
in the past and many relevant analysis and synthesis results were
obtained in the mid 80s to mid 90s era (e.g., \cite{hagiwara,voudah94,voubam93,chenqi,qichen,cola,meyer,khargo} to mention only a few.) An
interesting property of multi-rate sampling is its ability to remove certain unstable
zeros of the discrete-time system when viewed in the lifted LTI domain, which in turn allows for fulfilling certain
potential design requirements such as gain margin levels, or, strong
stabilization, that are not possible to satisfy with single rate.  It is precisely this property that we utilize and study in detail in the context of stealthy attack detection.  We show that dual rate control is sufficient to remove all the vulnerabilities to stealthy actuator attacks.  Of course, if all sensors are attached as well then there is no way to detect attacks.  On the other hand, we show that if a single measurement output remains secure, and if the modes of the system are observable from this output, then dual rate systems always provide the ability to detect combined sensor-actuator attacks.

Some standard notation we use is as follows: $\mathbb{Z}_{+}$, $\mathbb{R}^{n}$, $\mathbb{C}^{n}$ and $\mathbb{R}^{n\times m}$ denote the sets of non-negative integers, $n$-dimensional
real vectors, $n$-dimensional
complex vectors and $n\times m$ dimensional real matrices, respectively. For any $\mathbb{R}^{n}$ or $\mathbb{C}^{n}$ vector $x$ we denote $x'$ its transpose and $|x|:=\max_i\sqrt{x_i^2}$ where  $x'=\left[ x_{1},x_{2},...,x_{n}\right]$; for a sequence of real $n$-dimensional
vectors, $x=\{x(k)\}_{k\in\mathbb{Z}_{+}} $ we denote $||x||:=\sup_k|x(k)|$; for a sequence of real $n\times m$ dimensional real matrices $G=\{G_k\}_{k\in\mathbb{Z}_{+}} $ we denote its $\lambda$-transform $G(\lambda):=\sum_{k=0}^{\infty}G_k\lambda^k.$ For a $\lambda$-transform $x(\lambda)$ of a sequence $x$ of $n$-dimensional vectors $||x(\lambda)||=||x||$.

\section{System Model}
\begin{figure}[t]
\centering
\includegraphics
[width=3in]
{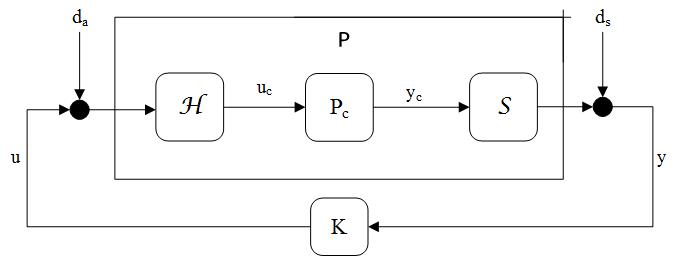}
\caption{The standard SD system}
\label{SDfig}
\end{figure}
We consider the physical, continuous-time, LTI plant $P_c=[A_c, B_c, C_c,
D_c]$ of Figure \ref{SDfig} that is controlled by a digital controller $K$ using
the standard zero order hold and sampling devices $\mathcal{H}$ and $%
\mathcal{S}$ respectively . In particular, in the absence of any disturbances $d_a$ and $d_s$, the digital controller input $u=\{u(k)\}$
converts to the continuous time input $u_c(t)=(\mathcal{H} u)(t)=u(k)$ for $%
kT\le t<(k+1)T$ where $T$ is the hold period, and the digital output $%
y=\{y(k)\}$ sequence is obtained by sampling the continuous time output $y_c$
with the same period $T$, i.e., $y(k)= (\mathcal{S}y_c)(k)=y_c(kT). $ The
corresponding discrete time LTI plant $P$ is defined by the relation $y=Pu$, i.e., $P=\mathcal{S} P_c \mathcal{H}$, and has a description $%
P=[A_d,B_d,C_d,D_d]$ where the state space matrices are obtained from the
corresponding continuous time as
\begin{equation}
\!\begin{aligned}
\label{discabc}
A_d& :=e^{A_{c}T}\in \mathbb{R}^{n\times n},& B_d& :=\int_{0}^{T}e^{A_{c}\tau }B_{c}{\rm d}\tau \in \mathbb{R}^{n\times n_{u}}, \\
C_d& :=C_{c}\in \mathbb{R}^{n_{y}\times n}, & D_d& :=D_{c}\in \mathbb{R}^{n_{y}\times n_{u}}.
\end{aligned}
\end{equation}We assume that the employed realization of the continuous plant $P_c$ is minimal, which implies that the same holds true for the discrete plant $P$ in the absence of pathological sampling (e.g., \cite{chenfranbook},) i.e., for almost all periods $T$.

Also in this figure, we consider the possibility of attacks in terms of
additive disturbances $d_a$ and $d_s$ respectively at the digital input $u$
and at the output $y$ of $P$. These attacks on the digital part of the
system can be on actuators only ($d_s=0$), sensors only ($d_a=0$), or on
both, coordinated or not. As they act on the cyber part of the system we
allow them to be unbounded sequences.

We assume that there is an attack detection mechanism in place that monitors
$u$ and $y$ and can detect an attack only if the effect of $d_{a}$ and/or $%
d_{s}$ on these signals is beyond a given noise level threshold $\theta >0$, i.e., only if $%
\left\vert \left[
\begin{array}{c}
y \\
u%
\end{array}%
\right] (k)\right\vert >\theta $ for some $k$.   Note that we implicitly assume that there are other inputs such as noise, not shown in Fig \ref{SDfig}, that have some effect on $u$ and $y$ which is what relates to the nonzero noise level $\theta$.  Accordingly, a stealthy
attack will be the case when the attack inputs $d_{a}$ and/or $d_{s}$ can
grow unbounded while maintaining their effect on $u$ and $y$ below the
detection limit. Specifically, if $d$ represents any of $d_{a}$ or $d_{s}$,
then the attack is stealthy if $\limsup_{k\rightarrow\infty}\left\vert d(k)\right\vert
=\infty $ while $\left\vert \left[
\begin{array}{c}
y \\
u%
\end{array}%
\right] (k)\right\vert \leq \theta $ all $k=0,1,2,\dots $.  In the sequel we consider various attack scenarios and analyze the
conditions of their detectability.



\section{Actuator Attacks}

We start with the case when only actuator attacks $d_a$ are present ($d_s=0$%
) and proceed in characterizing their effect on the monitoring vector $%
\left[
\begin{array}{c}
y \\
u%
\end{array}%
\right]$. Towards this end, let $P$ be factored as $P=\tilde{M}^{-1}\tilde{N}%
=NM^{-1}$ where $\tilde{N}, \tilde{M}$ and $N, M$ are left and right coprime
respectively, and consider the controller $K$ with a similar coprime
factorization as $K=\tilde{X}^{-1}\tilde{Y}=YX^{-1}$. The mappings from $d_a$
to $y$ and $u$ are given respectively as $(I-PK)^{-1}P$ and $K(I-PK)^{-1}P$.
Given that $K$ stabilizes $P$, it holds that $\tilde{M}X-\tilde{N}Y=:W$ is a
stable and stably invertible map (unit). Moreover, it can be easily checked
that
\begin{equation}
\left[
\begin{array}{c}
y \\
u%
\end{array}%
\right]= \left[
\begin{array}{c}
X \\
Y%
\end{array}%
\right]W\tilde{N}d_a.  \label{da-yu}
\end{equation}

As $X$ and $Y$ are right coprime and $W$ is a unit, it follows that a
stealthy attack is possible if and only if $\tilde{N}d_a$ is bounded for an
unbounded $d_a$. That is, when $\limsup_{k\rightarrow\infty}\left\vert d_a(k)\right\vert=\infty$
it holds that $\left\Vert \left[
\begin{array}{c}
y \\
u%
\end{array}%
\right]\right\Vert<\infty$ if and only if $\left\Vert \tilde{N}%
d_a\right\Vert<\infty$. The following proposition is a direct consequence of
the previous analysis.

\begin{proposition}
\label{zeros} Let $P$ be a \textquotedblleft tall" system, i.e., the number
of outputs is greater or equal to the number of inputs. Assume further that $%
P(\lambda )$ has no zero on the unit circle $\left\vert \lambda \right\vert
=1$. Then, an (unbounded) actuator stealthy attack is possible if and only if $%
P(\lambda )$ has a non-minimum phase zero other than at $\lambda=0$, i.e., a zero for $0<\left\vert\lambda \right\vert <1$.
\end{proposition}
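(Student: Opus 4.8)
The plan is to build directly on the equivalence derived just above the proposition. Since the pair $X,Y$ is right coprime and $W$ is a unit, boundedness of the monitored signal is equivalent to boundedness of $\tilde N d_a$; hence an unbounded actuator attack is stealthy if and only if there is an unbounded sequence $d_a$ (i.e.\ with $\limsup_{k\rightarrow\infty}|d_a(k)|=\infty$) for which $\tilde N d_a$ is bounded. I would then recast this in terms of the zeros of the stable left factor $\tilde N$: because $\tilde M,\tilde N$ are left coprime and (as is implicit in ``tall'') $P$ has full column rank $n_u$, the points of $\{|\lambda|\le 1\}$ at which $\tilde N(\lambda)$ loses column rank are exactly the transmission zeros of $P$ in that region; in particular the standing hypothesis that $P$ has no zero on $|\lambda|=1$ means precisely that $\tilde N(\lambda)$ has full column rank on the entire unit circle. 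So it is enough to show that an unbounded $d_a$ with $\tilde N d_a$ bounded exists if and only if $\tilde N$ loses column rank at some $\lambda_0$ with $0<|\lambda_0|<1$.

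For the ``if'' direction I would exhibit the attack. Pick such a $\lambda_0$ and a nonzero $v$ with $\tilde N(\lambda_0)v=\sum_{i\ge 0}\tilde N_i\lambda_0^{i}v=0$, and take $d_a(k):=\lambda_0^{-k}v$ (replacing it by its real and imaginary parts, or pairing it with the conjugate mode, if $\lambda_0\notin\mathbb R$). Since $|\lambda_0|<1$ this $d_a$ is unbounded. A direct convolution computation gives $(\tilde N d_a)(k)=\lambda_0^{-k}\sum_{i=0}^{k}\tilde N_i\lambda_0^{i}v=-\lambda_0^{-k}\sum_{i>k}\tilde N_i\lambda_0^{i}v$; since $\tilde N$ is stable and rational the sequence $\{\tilde N_i\}$ decays geometrically, and one checks that the right-hand side tends to $0$ geometrically. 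Hence $\tilde N d_a$ is bounded and the (unbounded) attack is stealthy.

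For the ``only if'' direction I would argue the contrapositive: suppose $\tilde N(\lambda)$ has full column rank throughout $0<|\lambda|\le 1$. If moreover $\tilde N(0)$ has full column rank, then $\tilde N$ is left invertible by a \emph{stable} rational $L$ with $L\tilde N=I$, so any $d_a$ with $\tilde N d_a$ bounded satisfies $d_a=L(\tilde N d_a)$ and is therefore bounded (as $L$ has a summable pulse response). If instead $\tilde N(0)$ drops rank — the case of a zero of $P$ at $\lambda=0$, explicitly allowed by the statement — then a left inverse of $\tilde N$ can still be chosen whose only poles in $\{|\lambda|\le 1\}$ lie at the origin, i.e.\ of the form $\lambda^{-N_0}L(\lambda)$ with $L$ stable and $N_0\ge 1$ an integer; then $L\tilde N=\lambda^{N_0}I$, so $\big(L(\tilde N d_a)\big)(k)=d_a(k-N_0)$ for all $k\ge N_0$, and boundedness of the left-hand side forces $d_a$ to be bounded. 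In either case no stealthy attack exists, so if one does exist $P$ must have a zero with $0<|\lambda|<1$.

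I expect the main obstacle to lie in the ``only if'' direction, and specifically in two supporting facts: that the rank losses of the coprime factor $\tilde N$ on $\{|\lambda|\le 1\}$ coincide with the transmission zeros of $P$ there — which is what lets the hypotheses on $P$ be read off from $\tilde N$ — and the construction of the (possibly delay-shifted) stable left inverse of the tall matrix $\tilde N$. The latter is where tallness/full column rank is essential, and where the zero at the origin must be split off so as not to ruin stability of the inverse; the assumption that $P$ has no zero on $|\lambda|=1$ is used precisely to keep $\tilde N$ bounded below on the unit circle, guaranteeing that the left inverse is genuinely stable. Without that boundary assumption the argument on the circle would be considerably more delicate.
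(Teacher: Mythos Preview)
Your proposal is correct and follows essentially the same route as the paper: both directions hinge on the reduction to $\tilde N d_a$, the forward direction exhibits the exponential input $d_a(k)=\lambda_0^{-k}v$ (the paper additionally scales by a small $\epsilon$ to slip under the threshold $\theta$), and the reverse direction argues that $\tilde N$ having no unstable zeros forces $d_a$ bounded whenever $\tilde N d_a$ is. Your treatment of the ``only if'' part is in fact more careful than the paper's one-line version---you spell out the stable left inverse and split off the possible zero at $\lambda=0$ via a delay, which the paper handles only informally in a subsequent remark.
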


\begin{proof}
Note that the unstable zeros of $P$ are zeros of $\tilde{N}$. Assuming that $%
P$ is SISO for simplicity with $P(z_0)=0$ where $0<\left\vert z_0\right\vert<1$%
, we have that $\tilde{N}(z_0)=0$ and consequently any input $%
d_a(k)=\epsilon {z_0}^{-k}$ will lead via Equation \ref{da-yu} to $%
\left\Vert \left[
\begin{array}{c}
y \\
u%
\end{array}%
\right]\right\Vert<\epsilon C_0$ where the constant $C_0>0$ depends on the
closed loop maps. For example, $C_0$ could be taken as $C_0=\left\Vert
\left[
\begin{array}{c}
(I-P(\lambda)K(\lambda))^{-1}P(\lambda) \frac{1}{1-(\lambda/z_0)} \\
K(\lambda)(I-P(\lambda)K(\lambda))^{-1}P(\lambda) \frac{1}{1-(\lambda/z_0)}%
\end{array}%
\right]\right\Vert. $ Thus, if $\epsilon$ is small enough, e.g., $0<\epsilon<%
\frac{\theta}{C_0}$, the input remains undetected. In the case where $P$ is
MIMO, the same arguments apply for inputs of the form $d_a(k)=\epsilon d_0{%
z_0}^{-k}$ where $d_0$ is the zero direction of $z_0$ which can be chosen
with $\left\vert d_0\right\vert=1$ .

To prove the reverse, note that if $P$ has no unstable zeros, then the same
holds for $\tilde{N}$ and thus $\left\Vert \tilde{N}d_a\right\Vert<\infty$
implies that $\left\Vert d_a\right\Vert<\infty$, so no stealth attacks are
possible.
\end{proof}

\begin{remark}
\label{rem:01}We remark here that if $P$ has zeros on the boundary $%
\left\vert \lambda \right\vert =1$ with no multiplicity but no other
unstable zeros (other than at $\lambda=0$,) then stealth attacks are not possible. Indeed, if $z_{0}$ is
a simple zero with $\left\vert z_{0}\right\vert =1$, then the corresponding
input that can be masked (``zeroed out") is of the form $d_{a}(k)=\epsilon d_{0}{z_{0}}^{-k}$
which is bounded with $\left\vert d_{a}(k)\right\vert <\epsilon $, and becomes undetected for small enough $\epsilon$.  But this case is uninteresting, as the disturbance has a level of noise (which can be taken care by any reasonably robust controller.)  On the
other hand, if there are multiplicities, stealthy attacks are possible. For
example, if $P$ is SISO and $z_{0}=1$ is a zero with multiplicity $2$, then
an unbounded input of the form $d_{a}(k)=\epsilon k$, $k=0,1,\dots $ remains
undetected for small enough $\epsilon .$ More generally, in the MIMO case
when a zero at the boundary has multiplicity, one has to check the
Smith-McMillan form of $P(\lambda )$ for invariant factors with multiplicity
corresponding to these zeros: stealthy attacks are possible if and only if
there are such factors.
\end{remark}

\begin{remark}
\label{rem:02}
When there is a zero of $P$ at $\lambda=0$ there is no corresponding  (causal) input signal to be ``zeroed out."
\end{remark}

The case when $P$ is \textquotedblleft fat", i.e. when the number of outputs
$y$ is less than the number of inputs $u$, is always conducive to stealthy
attacks as one input can mask the effect of the other. Indeed, consider a
two input one output $P=[P_{1}~P_{2}]$; the effect of attacks at the
individual control channels $d_{a1}$ and $d_{a2}$ on the output $y$ is $%
y=P_{1}d_{a1}+P_{2}d_{a2}+[P_{1}~P_{2}]u$ and thus, picking for example, $%
d_{a2}=-P_{2}^{-1}P_{1}d_{a1}$ with $d_{a1}$ arbitrary and unbounded leads
to $y=[P_{1}~P_{2}]u$, i.e. complete masking of the attacks.  \footnote{Strictly speaking, $P_2^{-1}$ may not exist if $P_2$ is strictly proper , i.e., $P_2$ has a zero at $\lambda=0$; but one can always pick $d_{a1}(\lambda)=\lambda \bar{d}_{a1}(\lambda) $ with $\bar{d}_{a1} $ unbounded and  make $(P_{2}^{-1}P_{1}d_{a1})(\lambda)$ meaningful. }




\section{Sensor Attacks}

The case of sensor only attack $d_s\ne 0, d_a=0$ can be viewed in a
similar spirit. In particular, by considering coprime factorizations for $P$
and $K$ as before, the effect of $d_s$ on the monitor vector is as
\begin{equation}
\left[
\begin{array}{c}
y \\
u%
\end{array}%
\right]=\left[
\begin{array}{c}
(I-PK)^{-1} \\
K(I-PK)^{-1}%
\end{array}%
\right]d_s=\left[
\begin{array}{c}
X \\
Y%
\end{array}%
\right]W\tilde{M}d_s.  \label{ds-yu}
\end{equation}

Therefore, using the same rationale as in the previous case, we can claim
that an attack is detectable if and only if there are no $d_s$ with $%
\left\Vert d_s\right\Vert=\infty$ and $\left\Vert \tilde{M}%
d_s\right\Vert<\infty$. This in turn means that attacks are detectable if
and only if $\tilde{M}$ has no unstable zeros, which is equivalent that $P$
is a stable system. More specifically, we have the following which can be
proved as in the Proposition \ref{zeros}.

\begin{proposition}
\label{poles} Assume that $P(\lambda)$ has no pole on the unit circle $%
\left\vert \lambda\right\vert=1$. Then, a sensor stealthy attack is possible
if and only if $P(\lambda)$ has a pole with $0<\left\vert
\lambda\right\vert<1$, i.e., an unstable pole other than $\lambda=0$.
\end{proposition}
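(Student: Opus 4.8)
The plan is to transcribe the proof of Proposition~\ref{zeros}, now starting from Equation~\ref{ds-yu} instead of Equation~\ref{da-yu}. First I would record the reduction already implicit in the text: because $X$ and $Y$ are right coprime, the map $\left[\begin{array}{c}X\\Y\end{array}\right]$ has a stable left inverse, and $W$ is a unit, so Equation~\ref{ds-yu} shows that the monitor vector $\left[\begin{array}{c}y\\u\end{array}\right]$ is bounded for an unbounded $d_s$ exactly when $\tilde M d_s$ is bounded for that $d_s$. The second ingredient is the standard identification: in the left coprime factorization $P=\tilde M^{-1}\tilde N$ over stable transfer functions, the unstable poles of $P$ are precisely the points of the open unit disk at which $\tilde M(\lambda)$ drops rank, i.e.\ the unstable zeros of $\tilde M$. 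So the proposition reduces to the claim that an unbounded $d_s$ with $\tilde M d_s$ bounded exists if and only if $\tilde M$ has a zero $p_0$ with $0<|p_0|<1$.

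For the ``if'' direction, suppose $P$ has such a pole $p_0$. Then $\det\tilde M(p_0)=0$, so I pick a zero direction $v_0$ with $|v_0|=1$ and $\tilde M(p_0)v_0=0$ (in the SISO case $v_0=1$, $\tilde M(p_0)=0$), and consider the sensor attack $d_s(k)=\epsilon v_0 p_0^{-k}$, i.e.\ $d_s(\lambda)=\frac{\epsilon v_0}{1-\lambda/p_0}$. Since $|p_0|<1$ we have $|d_s(k)|=\epsilon|p_0|^{-k}\to\infty$, so $d_s$ is unbounded, while $(\tilde M d_s)(\lambda)=\frac{\epsilon\,\tilde M(\lambda)v_0}{1-\lambda/p_0}$ is stable because $\tilde M$ is stable and $\tilde M(\lambda)v_0$ vanishes at $\lambda=p_0$, cancelling the pole. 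Substituting into Equation~\ref{ds-yu} then gives $\left\Vert\left[\begin{array}{c}y\\u\end{array}\right]\right\Vert\le\epsilon C_0$ with, exactly as in Proposition~\ref{zeros},
\[
C_0=\left\Vert\left[\begin{array}{c}(I-P(\lambda)K(\lambda))^{-1}\frac{1}{1-\lambda/p_0}\\ K(\lambda)(I-P(\lambda)K(\lambda))^{-1}\frac{1}{1-\lambda/p_0}\end{array}\right]v_0\right\Vert<\infty ,
\]
the finiteness coming from the fact that $(I-PK)^{-1}=XW\tilde M$ and $K(I-PK)^{-1}=YW\tilde M$ vanish at $p_0$ in the direction $v_0$. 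Taking $0<\epsilon<\theta/C_0$ keeps the monitor vector below $\theta$ for all $k$, so the attack is stealthy; for the converse, if $P$ has no pole with $0<|\lambda|<1$ and, by hypothesis, none on $|\lambda|=1$, then $\tilde M$ has no zero in the closed unit disk (except possibly at $\lambda=0$), hence $\tilde M^{-1}$ is stable and $\Vert\tilde M d_s\Vert<\infty$ forces $\Vert d_s\Vert=\Vert\tilde M^{-1}\tilde M d_s\Vert<\infty$, so no stealthy sensor attack exists.

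The step I expect to require the most care is the exceptional value $\lambda=0$. A zero of $\tilde M$ there corresponds to a pole of $P$ at $z=\infty$, and the candidate masking signal $\frac{1}{1-\lambda/p_0}$ degenerates as $p_0\to0$ (there is no causal sequence in the role of $p_0^{-k}$), the pole-side analogue of Remark~\ref{rem:02}; such a point therefore cannot be exploited, which both explains the ``other than $\lambda=0$'' qualifier and forces one to check that deleting it leaves $\tilde M$ boundedly invertible on the relevant signals. For the discrete plant $P=\mathcal{S}P_c\mathcal{H}$ of interest $\tilde M(0)$ is in fact invertible, so the point is vacuous here, but cleanly accounting for the $\lambda=0$ and boundary cases (in the spirit of Remark~\ref{rem:01}) is the only genuinely delicate part; the rest is the Proposition~\ref{zeros} argument verbatim with $\tilde N$ replaced by $\tilde M$ and ``zeros of $P$'' replaced by ``poles of $P$''.
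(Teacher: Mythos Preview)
Your proposal is correct and is exactly the approach the paper takes: the paper's own ``proof'' consists solely of the remark that the result ``can be proved as in Proposition~\ref{zeros},'' and you have carried out precisely that transcription, replacing $\tilde N$ by $\tilde M$, $d_a$ by $d_s$, and unstable zeros of $P$ by unstable poles of $P$ via Equation~\ref{ds-yu}. Your additional care with the $\lambda=0$ case is consistent with (and more explicit than) the paper's Remark~\ref{rem:02}.
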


Regarding poles of $P(\lambda)$ on the boundary $\left\vert
\lambda\right\vert=1$ similar remarks hold as in the actuator attack case.
Namely, if these poles are simple then there is no stealthy attack. If they
have multiplicities, then their multiplicities in the corresponding
invariant factors in the Smith-McMillan form determine whether stealthy
attacks are possible.



\section{Coordinated Actuator Sensor Attacks}

In the case when a coordination of actuator and sensor attack is possible,
stealthy attacks are always possible even in the case where $P$ is stable
and minimum phase. Indeed, in this case the effect of $d_a$ can be
completely masked by canceling its effect at the output via $d_s$: just pick
$d_s=-Pd_a$ with $d_a$ arbitrary and unbounded, then $y=Pu$. Therefore,
unless there are outputs that are not attacked, this situation is not of
interest as there is no hope to detect the attack. If there are such attack-free outputs, then the problem reverts to the
actuator only attack case, with these outputs used for analysis and design.  As a consequence, in the sequel we consider the actuator only attack case where the secure sensor outputs are assumed to provide an observable continuous time system $P_c$.

\section{Dual Rate Control}

\begin{figure}[t]
\centering
\includegraphics
[width=3in]
{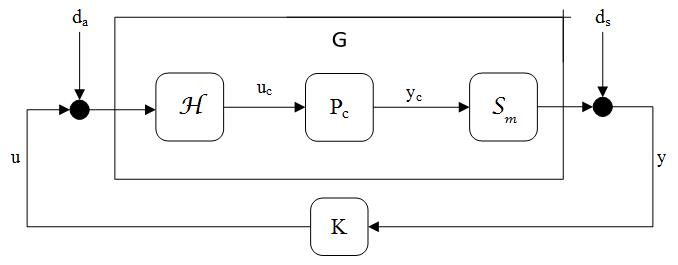}
\caption{A dual rate SD system}
\label{MRfig}
\end{figure}

In this paper, we focus on a particular MR scheme that allows attacks to be
detected by ensuring that there are no relevant unstable zeros in the lifted
system. This scheme is simpler to the single rate one with periodic
controller obtained in the context of gain margin maximization in \cite%
{frageo}. More specifically, we consider the SD scheme of Figure \ref{MRfig}
(temporarily without any disturbances) where the output is sampled with
period $T/m$ where $m$ is a sufficiently large integer, i.e., $y(k)=(%
\mathcal{S}_my_c)(t):=y_c(kT/m)$. A similar scheme has been used in \cite%
{hagiwara} in the context of strong stabilization. Herein, we provide
certain properties of the unstable zeros of the lifted system that guarantee
detectability of actuator attacks.

To this end, let the corresponding discrete-time system mapping $u$ to $y$
be
\begin{equation*}
G=\mathcal{S}_{m}P_{c}\mathcal{H}.
\end{equation*}
For this MR discrete system we have that
\begin{equation*}
\Lambda ^{m}G=G\Lambda
\end{equation*}%
where $\Lambda $ is the 1-step right shift operator on discrete sequences $%
\{x(k)\}$, i.e., $(\Lambda x)(k+1)=x(k)$ with $(\Lambda x)(0)=0$. Using
standard lifting techniques (e.g., \cite{chenfranbook}) one can obtain a
shift invariant (LTI) description $\tilde{G}$ of the discrete dynamics by
grouping the plant input and output signals as $\tilde{u}(k)=u(k)$ and $%
\tilde{y}(k)= [y^{\prime }_c(kT/m)~y^{\prime }_c((k+1)T/m)\dots y^{\prime
}_c((k+m-1)T/m)]^{\prime }$ (similarly for $\tilde{d}_a$ and $\tilde{d}_s$.)
A state space description for $\tilde{G}$ can be obtained as follows:

Define state space matrices
\begin{equation*}
\!\begin{aligned}
A & :=e^{A_{c}T/m}\in \mathbb{R}^{n\times n},& B& :=\int_{0}^{T/m}e^{A_{c}\tau
}B_{c}\mathrm{d}\tau \in \mathbb{R}^{n\times n_{u}}, \\
C & :=C_{c}\in \mathbb{R}^{n_{y}\times n},& D& :=D_{c}\in \mathbb{R}%
^{n_{y}\times n_{u}}.
\end{aligned}
\end{equation*}
Then%
\begin{equation}
\tilde{G}=\left[
\begin{tabular}{l|l}
$\tilde{A}$ & $\tilde{B}$ \\ \hline
$\tilde{C}$ & $\tilde{D}$%
\end{tabular}%
\right] ,  \label{eq:lifted}
\end{equation}%
where%
\begin{eqnarray*}
\tilde{A} &=&A^{m}\in \mathbb{R}^{n\times n},\tilde{B}%
=\sum_{k=0}^{m-1}A^{k}B\in \mathbb{R}^{n\times n_{u}}, \\
\tilde{C} &=&\left[
\begin{array}{c}
C \\
CA \\
\vdots  \\
CA^{m-1}%
\end{array}%
\right] \in \mathbb{R}^{mn_{y}\times n}, \\
\tilde{D} &=&\left[
\begin{array}{c}
D \\
CB+D \\
\vdots  \\
C\sum_{k=0}^{m-2}A^{k}B+D%
\end{array}%
\right] \in \mathbb{R}^{mn_{y}\times n_{u}}.
\end{eqnarray*}%
Also, it becomes useful to define a discrete-time system $P_{m}:=\left[
\begin{tabular}{l|l}
$A$ & $B$ \\ \hline
$C$ & $D$%
\end{tabular}%
\right] $. This system corresponds to the single-rate sampling and hold
scheme of the original plant $P_{c}$ with a period of $T/m$, i.e., $P_{m}=%
\mathcal{S}_{m}P_{c}\mathcal{H}_{m}$ where $\mathcal{H}_{m}$ is accordingly
generating a continuous signal $u_{c}$ from the discrete $u$ as $u_{c}(t)=(%
\mathcal{H}_{m}u)(t)=u(k)$ for $kT/m\leq t<(k+1)T/m$. It is clear that $P_{m}
$ has the same dimension as $P_{c}$, i.e. it maps $n_{u}$ inputs to $n_{y}$
outputs. Moreover, given that $P_{c}$ holds a controllable and observable
realization, and the sampling is not pathological, it follows that the
inherited realization of $P_{m}$ is also controllable and observable. Based
on our assumptions on the sampling, it is also easily verified that the
realization of $\tilde{G}$ as above is controllable and observable. Let $%
\tilde{M}_{\tilde{G}}$ and $\tilde{N}_{\tilde{G}}$ be the left coprime
factors of $\tilde{G}$. We will use the state-space realization of $\tilde{N}%
_{\tilde{G}}$ as
\begin{equation}
\tilde{N}_{\tilde{G}}=\left[
\begin{tabular}{c|c}
$\tilde{A}+H\tilde{C}$ & $\tilde{B}+H\tilde{D}$ \\ \hline
$\tilde{C}$ & $\tilde{D}$%
\end{tabular}%
\right] ,  \label{eq:n_tilde}
\end{equation} where $H$ is chosen such that $\tilde{A}+H\tilde{C}$ is Schur stable. It is
easy to show that $\tilde{G}$ and $\tilde{N}_{\tilde{G}}$ have the same
non-minimum phase zeros.
\begin{figure}[t]
\centering
\includegraphics
[width=2in]
{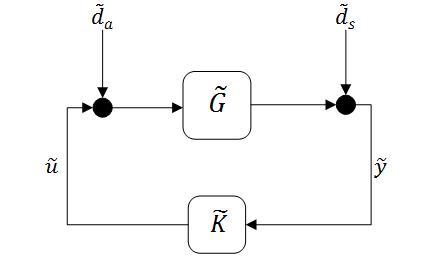}
\caption{The lifted system}
\label{lifted}
\end{figure}
We consider now the closed loop in the lifted domain in Figure \ref{lifted}
where the controller is $\tilde{K}$ and proceed to argue that the lifted
loop is not susceptible to stealthy actuator attacks $\tilde{d}_{a}$, and
thus the original MR loop of Figure \ref{MRfig} is not susceptible either.
To this end, the integer $m$ is chosen such that the following assumptions
are satisfied.

\begin{assumption}
\label{assume:01}The matrix $B$ is full column rank.
\end{assumption}

\begin{assumption}
\label{assume:02}The matrix $\mathcal{O}:\mathcal{=}\left[
\begin{array}{c}
C \\
CA \\
\vdots \\
CA^{m-2}%
\end{array}%
\right] $ is full column rank.
\end{assumption}

The first assumption is standard and holds generically if $B_c$ is full
column rank in the continuous system. The second assumption holds for large
enough $m$, in particular $m=n+1$, if the pair $\left( A,C\right) $ is
observable, which is true as $P_m$ is minimal. It can also hold however,
even with a small $m$ generically.  Also, if
Assumption \ref{assume:02} holds, $\tilde{G}$ is a tall system. Then the
following lemma characterizes the zeros of $\tilde{G}$.

\begin{lemma}
\label{lem:01}Consider the lifted system $\tilde{G}$ as in (\ref{eq:lifted})
together with Assumptions (\ref{assume:01}) and (\ref{assume:02}). Then $%
\tilde{G}$ has at most one non-minimum zero and is located at $\lambda =1$.
\end{lemma}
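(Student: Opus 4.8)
The plan is to read off the invariant zeros of $\tilde G$ from the fast-rate ($T/m$) signals that are hidden inside the lifted input and output, and then to use Assumptions~\ref{assume:01}--\ref{assume:02} to pin any such zero to $\lambda=1$. First I would recall that $\lambda_0\neq 0$ is a zero of $\tilde G$ exactly when there is a nonzero pair $(x_0,u_0)\in\mathbb{C}^n\times\mathbb{C}^{n_u}$ for which the geometric input $\tilde u(k)=\lambda_0^{-k}u_0$ together with the initial state $\tilde x(0)=x_0$ produces $\tilde y(k)=0$ for all $k$ (so that, in particular, $\tilde x(k)=\lambda_0^{-k}x_0$). Unpacking the lifted realization (\ref{eq:lifted}): $\tilde u(k)$ is a single value held over a whole slow period, while $\tilde y(k)$ collects the $m$ fast-rate output samples taken over the $k$-th slow period. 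Hence, writing $\xi_0,\dots,\xi_m$ for the fast-rate states of $P_m=[A,B,C,D]$ over the first slow period (so $\xi_0=x_0$, $\xi_{i+1}=A\xi_i+Bu_0$, and $\xi_m=\tilde x(1)=\lambda_0^{-1}x_0$), the condition $\tilde y\equiv 0$ is equivalent to $C\xi_i+Du_0=0$ for $i=0,\dots,m-1$.

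The core of the argument is then a short telescoping computation. Introduce the increments $\eta_i:=\xi_{i+1}-\xi_i$. Directly from $\xi_{i+1}=A\xi_i+Bu_0$ one obtains $\eta_{i+1}=A\eta_i$, hence $\eta_i=A^i\eta_0$ with $\eta_0=(A-I)x_0+Bu_0$; and subtracting consecutive output relations gives $C\eta_i=0$ for $i=0,\dots,m-2$, i.e.\ $CA^i\eta_0=0$ for $i=0,\dots,m-2$, which is precisely $\mathcal{O}\,\eta_0=0$. By Assumption~\ref{assume:02} the matrix $\mathcal{O}$ has full column rank, so $\eta_0=0$; therefore every $\eta_i=0$ and $\xi_0=\xi_1=\dots=\xi_m$. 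Comparing $\xi_0=x_0$ with $\xi_m=\lambda_0^{-1}x_0$ forces $(1-\lambda_0^{-1})x_0=0$.

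To finish, suppose $\lambda_0\notin\{0,1\}$. Then $x_0=0$, and the relation $\eta_0=0$ collapses to $Bu_0=0$, so by Assumption~\ref{assume:01} ($B$ full column rank) also $u_0=0$, contradicting $(x_0,u_0)\neq 0$. Hence $\tilde G$ has no zero outside $\{0,1\}$, which is exactly the assertion that there is at most one non-minimum phase zero and that, when it exists, it is located at $\lambda=1$. (For $\lambda_0=1$ the same computation reduces the zero conditions to $\left[\begin{array}{cc}A-I & B\\ C & D\end{array}\right]\left[\begin{array}{c}x_0\\ u_0\end{array}\right]=0$, so $\tilde G$ has a zero at $\lambda=1$ precisely when the fast single-rate plant $P_m$ does.) I expect the only real obstacle to be the bookkeeping in the first step — correctly translating the abstract invariant-zero condition on $(\tilde A,\tilde B,\tilde C,\tilde D)$ into the held-input/fast-sampled-output picture so that the increments $\eta_i$ and the observability matrix $\mathcal{O}$ of Assumption~\ref{assume:02} emerge cleanly; once that dictionary is in place, the telescoping identity $\eta_i=A^i\eta_0$ together with the two rank assumptions finishes the proof.
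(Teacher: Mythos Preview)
Your proof is correct and follows essentially the same route as the paper: both arguments difference consecutive blocks of the lifted output equation (you via the increments $\eta_i=\xi_{i+1}-\xi_i$, the paper via left-multiplication by a block bidiagonal matrix $X$) to bring in the observability matrix $\mathcal{O}$ of Assumption~\ref{assume:02}, and then combine with the state-evolution constraint and Assumption~\ref{assume:01} to force $\lambda_0=1$. Your time-domain packaging sidesteps the coprime factor $\tilde N_{\tilde G}$ and the observer gain $H$ that the paper introduces (and then cancels), so it reads a bit more directly, but the mathematical content is the same.
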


\begin{proof}
Since $\tilde{N}_{\tilde{G}}$ and $\tilde{G}$ have the same non-minimum
phase zeros, we will prove this lemma for $\tilde{N}_{\tilde{G}}$. Notice
that since $\tilde{N}_{\tilde{G}}$ is tall, $\left\vert \lambda
_{0}\right\vert \leq 1$ is a zero if and only if there exists a vector $\nu
\in \mathbb{R}^{n_{u}}$ such that%
\begin{eqnarray*}
&&\tilde{N}_{\tilde{G}}\left( \lambda _{0}\right) \nu = \\
&&\left[ \lambda _{0}\tilde{C}\left[ I-\lambda _{0}\left( \tilde{A}+H\tilde{C%
}\right) \right] ^{-1}\left( \tilde{B}+H\tilde{D}\right) +\tilde{D}\right]
\nu =0.
\end{eqnarray*}%
Notice that $\left[ I-\lambda _{0}\left( \tilde{A}+H\tilde{C}\right) \right]
^{-1}$ is well-defined as all the eigenvalues of $\tilde{A}+H\tilde{C}$ are
inside the unit circle. Now, let $\xi =\left[ I-\lambda _{0}\left( \tilde{A}%
+H\tilde{C}\right) \right] ^{-1}\left( \tilde{B}+H\tilde{D}\right) \nu $.
Then, pre-multiplying by $\left[ I-\lambda _{0}\left( \tilde{A}+H\tilde{C}%
\right) \right] $ and using $\lambda _{0}\tilde{C}\xi +\tilde{D}\nu =0,$ we
get
\begin{eqnarray}
\lambda _{0}\tilde{C}\xi +\tilde{D}\nu  &=&0,  \label{eq:04} \\
\left( I-\lambda _{0}\tilde{A}\right) \xi -\tilde{B}\nu  &=&0.  \label{eq:05}
\end{eqnarray}%
Pre-multiplying (\ref{eq:04}) by $X$, where $X$ is a matrix $X\in \mathbb{R}%
^{(m-1)n_{y}\times mn_{y}}$ given as
\begin{equation}
X=\left[
\begin{array}{ccccc}
I & -I & 0 & \cdots  & 0 \\
0 & I & -I &  &  \\
\vdots  &  & \ddots  & \ddots  &  \\
0 & \cdots  & 0 & I & -I%
\end{array}%
\right] .  \label{eq:X}
\end{equation}%
we get%
\begin{equation*}
\lambda _{0}X\tilde{C}\xi +X\tilde{D}\nu =\mathcal{O}\left[ \lambda
_{0}\left( I-A\right) \xi -B\nu \right] =0.
\end{equation*}%
Since $\mathcal{O}$ is full column rank by Assumption \ref{assume:02}, it
holds true that%
\begin{equation*}
\lambda _{0}\left( A-I\right) \xi +B\nu =0,
\end{equation*}%
which together with (\ref{eq:05}) gives%
\begin{equation*}
\left[ \left( I-\lambda _{0}\tilde{A}\right) B+\lambda _{0}\left( A-I\right)
\tilde{B}\right] \nu =0.
\end{equation*}%
Simplifying further yields%
\begin{equation*}
\left( 1-\lambda _{0}\right) B\nu =0.
\end{equation*}%
Therefore, if $\nu $ is nonzero then $\lambda _{0}=1$ since, by Assumption %
\ref{assume:01}, $B$ is full column rank.
\end{proof}

According to Lemma \ref{lem:01}, the lifted system, $\tilde{G}$, has no
zeros inside the unit circle. However, it may have a zero at $\lambda =1$.
Based on Proposition \ref{zeros} and Remark \ref{rem:01}, an (unbounded)
actuator stealthy attack will not possible if $\lambda =1$ is zero of $%
\tilde{G}$ with multiplicity of at most one. Indeed, this is the case as it
is proved in the following theorem:

\begin{theorem}
Consider the dual rate SD scheme as in Figure \ref{lifted}. Then, there does
not exist any (unbounded) actuator stealthy attack if Assumptions \ref%
{assume:01} and \ref{assume:02} are met.
\end{theorem}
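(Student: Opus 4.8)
The plan is to carry the coprime-factorization argument of Section III over to the lifted loop of Figure~\ref{lifted}, and then to continue the computation in the proof of Lemma~\ref{lem:01} by one further order. First I would repeat the derivation leading to (\ref{da-yu}), now for the lifted plant $\tilde G=\tilde M_{\tilde G}^{-1}\tilde N_{\tilde G}$ in feedback with a stabilizing $\tilde K=\tilde Y\tilde X^{-1}$, obtaining for an actuator attack (recall $\tilde d_a=d_a$) that $[\tilde y;\tilde u]=[\tilde X;\tilde Y]\,\tilde W\,\tilde N_{\tilde G}\,d_a$ with $\tilde W$ a unit and $\tilde X,\tilde Y$ right coprime; note also that the detector condition $|[\tilde y;\tilde u](k)|\le\theta$ in the lifted domain coincides with the one in the MR loop of Figure~\ref{MRfig}, since lifting merely relabels the fast output samples while leaving $d_a$ and $u$ unchanged. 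Exactly as in Section III, this means a stealthy actuator attack exists if and only if there is an unbounded $d_a$ with $\tilde N_{\tilde G}d_a$ bounded; and by Proposition~\ref{zeros} together with Remark~\ref{rem:01}, no such $d_a$ exists as soon as every zero of $\tilde N_{\tilde G}$ (equivalently of $\tilde G$) in the closed unit disc other than at $\lambda=0$ lies on $|\lambda|=1$ with multiplicity one. By Lemma~\ref{lem:01}, $\tilde G$ has no zero strictly inside the disc and its only possible boundary zero is at $\lambda=1$, so everything reduces to showing that this zero, when it occurs, has multiplicity exactly one.

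I would prove this by contradiction, reusing the state-space relations from the proof of Lemma~\ref{lem:01}. If $\lambda=1$ were a zero of $\tilde N_{\tilde G}$ of multiplicity at least two, then by the standard state-space characterization of zeros applied to the realization (\ref{eq:n_tilde}) --- its uncontrollable modes, if any, being Schur and hence outside the closed $\lambda$-disc --- there would be a length-two null chain at $\lambda_0=1$: vectors $(\xi_0,\nu_0)\neq 0$, necessarily with $\nu_0\neq 0$ by observability, satisfying the first-order relations (\ref{eq:04})--(\ref{eq:05}) at $\lambda_0=1$, together with $(\xi_1,\nu_1)$ satisfying $\tilde C\xi_1+\tilde D\nu_1=0$ and $(I-\tilde A)\xi_1-\tilde B\nu_1=-\xi_0$ (the output-injection gain $H$ cancelling from both pairs, just as in Lemma~\ref{lem:01}). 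Premultiplying $\tilde C\xi_1+\tilde D\nu_1=0$ by the matrix $X$ of (\ref{eq:X}) and invoking Assumption~\ref{assume:02}, so that $\mathcal{O}$ is left invertible, gives $(I-A)\xi_1=B\nu_1$, hence $\xi_1=(I-A)^{-1}B\nu_1$ with $I-A$ invertible under non-pathological sampling. Substituting this into $(I-\tilde A)\xi_1-\tilde B\nu_1=-\xi_0$ and using $\tilde A=A^{m}$, $\tilde B=\sum_{k=0}^{m-1}A^{k}B$ together with $(I-A^{m})(I-A)^{-1}=\sum_{k=0}^{m-1}A^{k}$, the left-hand side collapses to $\tilde B\nu_1-\tilde B\nu_1=0$, forcing $\xi_0=0$. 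The identical manipulation on the first-order relations yields $\xi_0=(I-A)^{-1}B\nu_0$, hence $B\nu_0=0$, and Assumption~\ref{assume:01} ($B$ full column rank) then forces $\nu_0=0$, a contradiction.

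Consequently the zero at $\lambda=1$, if present at all, has multiplicity one, and by Proposition~\ref{zeros} and Remark~\ref{rem:01} no unbounded stealthy actuator attack exists for the lifted loop, hence none for the dual rate SD scheme. I expect the coprime-factorization bookkeeping of the first step and the matrix identities of the second to be routine; the genuinely delicate point --- and the main obstacle --- is to write down the correct second-order (null-chain) zero condition at $\lambda=1$ from (\ref{eq:n_tilde}) and to check that it collapses, i.e. to verify that the single elimination step carried out in the proof of Lemma~\ref{lem:01} admits no nontrivial ``derivative,'' which is exactly where both full-rank Assumptions~\ref{assume:01} and~\ref{assume:02} must be used a second time.
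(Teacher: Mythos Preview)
Your strategy coincides with the paper's: reduce to showing that the possible zero of $\tilde G$ at $\lambda=1$ is simple, write down the length-two null-chain conditions (you via the Rosenbrock pencil of $(\tilde A,\tilde B,\tilde C,\tilde D)$, the paper via $\tilde N_{\tilde G}(1)$ and $\frac{\mathrm d}{\mathrm d\lambda}\tilde N_{\tilde G}(\lambda)\big|_{\lambda=1}$, with $H$ dropping out either way), premultiply the output equations by the matrix $X$ of (\ref{eq:X}) to invoke Assumption~\ref{assume:02}, exploit the identity $(I-A)\tilde B=(I-\tilde A)B$, and conclude $B\nu_0=0$, hence $\nu_0=0$ by Assumption~\ref{assume:01}.

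There is one technical slip. You solve $(I-A)\xi_1=B\nu_1$ as $\xi_1=(I-A)^{-1}B\nu_1$, asserting that $I-A$ is invertible ``under non-pathological sampling.'' That is false in general: non-pathological sampling only prevents distinct eigenvalues of $A_c$ from coalescing under $e^{(\cdot)T/m}$; it does not rule out $1$ being an eigenvalue of $A$, which happens for \emph{every} sampling period whenever $A_c$ has an eigenvalue at $0$ (an integrator in $P_c$). The paper avoids the inversion altogether: it left-multiplies the second-order state relation $(I-\tilde A)\xi_1-\tilde B\nu_1=\xi_0$ by $(I-A)$, uses that $I-A$ and $I-\tilde A=I-A^{m}$ commute, substitutes $(I-A)\xi_1=B\nu_1$, and lets $(I-A)\tilde B-(I-\tilde A)B=0$ annihilate the $\nu_1$-term, yielding $(I-A)\xi_0=0$ directly; combined with the first-order relation $(I-A)\xi_0=B\nu_0$ this gives $B\nu_0=0$ without ever inverting $I-A$. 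Your argument is easily repaired along the same line. (Incidentally, the sign in your second-order state relation should read $(I-\tilde A)\xi_1-\tilde B\nu_1=+\xi_0$ rather than $-\xi_0$, but this is immaterial to the conclusion $\xi_0=0$.)
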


\begin{proof}
As discussed before, we need to show that $\lambda =1$ is a zero of $\tilde{G%
}$ or equivalently $\tilde{N}_{\tilde{G}}$ with the multiplicity of at most
one. It can be argued that (\cite{dahlehb}-Section 6.5) $\lambda =1$ is a
zero of algebraic multiplicity greater than one if and only if the matrix $%
T:=\left[
\begin{array}{cc}
\tilde{N}_{\tilde{G}}\left( 1\right)  &  0 \\
\frac{\mathrm{d}}{\mathrm{d}\lambda }\tilde{N}_{\tilde{G}}\left( \lambda
\right) |_{\lambda =1} & \tilde{N}_{\tilde{G}}\left( 1\right)
\end{array}%
\right] $ has a right null chain; that is, there exists a vector $\nu =\left[
\begin{array}{c}
\nu _{1} \\
\nu _{2}%
\end{array}%
\right] $, with $\nu _{1}\neq 0$, such that $T\nu =0$. By the way of
contradiction, we will show that if $T\nu =0$ then $\nu _{1}=0$. Direct
calculations show that if $T\nu =0$ then%
\begin{equation}
\left[ \tilde{C}\left[ I-\left( \tilde{A}+H\tilde{C}\right) \right]
^{-1}\left( \tilde{B}+H\tilde{D}\right) +\tilde{D}\right] \nu _{1}=0,
\label{eq:06}
\end{equation}%
\begin{align}
& \left[ \tilde{C}\left[ I-\left( \tilde{A}+H\tilde{C}\right) \right]
^{-2}\left( \tilde{B}+H\tilde{D}\right) \right] \nu _{1}  \notag \\
& +\left[ \tilde{C}\left[ I-\left( \tilde{A}+H\tilde{C}\right) \right]
^{-1}\left( \tilde{B}+H\tilde{D}\right) +\tilde{D}\right] \nu _{2}=0.
\label{eq:07}
\end{align}%
Define,%
\begin{eqnarray*}
\xi _{1} &=&\left[ I-\left( \tilde{A}+H\tilde{C}\right) \right] ^{-1}\left(
\tilde{B}+H\tilde{D}\right) \nu _{1}, \\
\xi _{2} &=&\left[ I-\left( \tilde{A}+H\tilde{C}\right) \right] ^{-1}\left[
\xi _{1}+\left( \tilde{B}+H\tilde{D}\right) \nu _{2}\right] .
\end{eqnarray*}%
Pre-multiplying $\xi _{1}$ and $\xi _{2}$ by $\left[ I-\left( \tilde{A}+H%
\tilde{C}\right) \right] $ and grouping terms we get%
\begin{eqnarray}
\left( I-\tilde{A}\right) \xi _{1}-\tilde{B}\nu _{1} &=&H\left( \tilde{C}\xi
_{1}+\tilde{D}\nu _{1}\right) ,  \label{eq:08} \\
-\xi _{1}+\left( I-\tilde{A}\right) \xi _{2}-\tilde{B}\nu _{2} &=&H\left(
\tilde{C}\xi _{2}+\tilde{D}\nu _{2}\right) .  \label{eq:09}
\end{eqnarray}%
From (\ref{eq:06})-(\ref{eq:09}),%
\begin{eqnarray}
\tilde{C}\xi _{1}+\tilde{D}\nu _{1} &=&0,  \label{eq:10} \\
\tilde{C}\xi _{2}+\tilde{D}\nu _{2} &=&0,  \label{eq:11} \\
\left( I-\tilde{A}\right) \xi _{1}-\tilde{B}\nu _{1} &=&0,  \label{eq:12} \\
-\xi _{1}+\left( I-\tilde{A}\right) \xi _{2}-\tilde{B}\nu _{2} &=&0.
\label{eq:13}
\end{eqnarray}%
Furthermore, pre-multiplying (\ref{eq:10}) and (\ref{eq:11}) gives%
\begin{eqnarray*}
X\tilde{C}\xi _{1}+X\tilde{D}\nu _{1} &=&\mathcal{O}\left[ \left( I-A\right)
\xi _{1}-B\nu _{1}\right] =0, \\
X\tilde{C}\xi _{2}+X\tilde{D}\nu _{2} &=&\mathcal{O}\left[ \left( I-A\right)
\xi _{2}-B\nu _{2}\right] =0,
\end{eqnarray*}%
where $X$ is as in (\ref{eq:X}), which in turn imply%
\begin{eqnarray}
\left( I-A\right) \xi _{1}-B\nu _{1} &=&0,  \label{eq:14} \\
\left( I-A\right) \xi _{2}-B\nu _{2} &=&0.  \label{eq:15}
\end{eqnarray}%
Eliminating $\xi _{2}$ between (\ref{eq:13}) and (\ref{eq:15}), we get%
\begin{equation*}
-\left( I-A\right) \xi _{1}-\left[ \left( I-A\right) \tilde{B}-\left( I-%
\tilde{A}\right) B\right] \nu _{2}=0.
\end{equation*}%
Notice that $\left( I-A\right) \tilde{B}-\left( I-\tilde{A}\right) B=0$ and
hence the last equation implies%
\begin{equation*}
\left( I-A\right) \xi _{1}=0
\end{equation*}%
which in turn, together with (\ref{eq:14}), implies $B\nu _{1}=0$. By
Assumption \ref{assume:01}, $B\nu _{1}=0$ implies $\nu _{1}=0$ and this
completes the proof.
\end{proof}

As a final comment from the previous analysis, we offer conditions when $%
\tilde{G}$ has a zero $\lambda=1$. We note that, as proved in the previous
theorem, these zeros are not a problem since they cannot generate stealthy
attacks.

\begin{proposition}
\label{prop:01}Let $P_{c}$ be \textquotedblleft tall." Then $\tilde{G}$ has
a zero at $\lambda =1$ if and only if $P_{m}$ does.
\end{proposition}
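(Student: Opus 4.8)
The plan is to reduce everything to the transmission-zero characterization already used in the proof of Lemma~\ref{lem:01}. Since $P_c$ is tall, so is $P_m$, and under Assumption~\ref{assume:02} so is $\tilde G$; hence for each of these (minimal) systems, $\lambda=1$ is a zero precisely when the corresponding Rosenbrock pencil has a right null vector with nonzero input component. Concretely, $\lambda=1$ is a zero of $\tilde G$ iff there are $\nu\in\mathbb R^{n_u}$, $\nu\ne 0$, and $\xi$ with $\tilde C\xi+\tilde D\nu=0$ and $(I-\tilde A)\xi-\tilde B\nu=0$ (these are (\ref{eq:04})--(\ref{eq:05}) specialized to $\lambda_0=1$), while $\lambda=1$ is a zero of $P_m$ iff there are $\nu\ne0$ and $\zeta$ with $C\zeta+D\nu=0$ and $(I-A)\zeta-B\nu=0$. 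So the proposition amounts to the equivalence of these two solvability statements, and the whole argument runs on the structural identities $\tilde A=A^m$, $\tilde B=\bigl(\sum_{k=0}^{m-1}A^k\bigr)B$, the block-Vandermonde form of $\tilde C$, and the telescoping form of $\tilde D$.

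For the ``if'' direction I would take a solution $(\zeta,\nu)$ of the $P_m$ equations and simply try $(\xi,\nu)=(\zeta,\nu)$. The state equation follows from $(I-A^m)=\bigl(\sum_{k=0}^{m-1}A^k\bigr)(I-A)$, which turns $(I-A)\zeta=B\nu$ into $(I-\tilde A)\zeta=\tilde B\nu$. The output equation I would check block row by block row: the $j$-th block row of $\tilde C\zeta+\tilde D\nu$ is $C\bigl[A^j\zeta+\bigl(\sum_{k=0}^{j-1}A^k\bigr)B\nu\bigr]+D\nu$, and substituting $B\nu=(I-A)\zeta$ gives $A^j\zeta+\bigl(\sum_{k=0}^{j-1}A^k\bigr)(I-A)\zeta=A^j\zeta+(I-A^j)\zeta=\zeta$, so every block row collapses to $C\zeta+D\nu=0$, which holds. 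Since $\nu\ne0$, $\lambda=1$ is a zero of $\tilde G$. For the ``only if'' direction I would reuse the telescoping step from the proof of Lemma~\ref{lem:01}: given $(\xi,\nu)$ with $\nu\ne0$ solving the $\tilde G$ equations, pre-multiply $\tilde C\xi+\tilde D\nu=0$ by the first-difference matrix $X$ of (\ref{eq:X}) to get $\mathcal O\bigl[(I-A)\xi-B\nu\bigr]=0$, invoke Assumption~\ref{assume:02} to cancel the full-column-rank $\mathcal O$, and combine with the first block row $C\xi+D\nu=0$; then $(\xi,\nu)$ solves the $P_m$ equations with $\nu\ne0$, so $\lambda=1$ is a zero of $P_m$.

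The computations are routine; the two things to be careful about are (i) phrasing the zero test through the Rosenbrock pencil rather than through $D+\lambda C(I-\lambda A)^{-1}B$, so that a possibly singular $I-A$ causes no trouble at $\lambda=1$, and (ii) justifying that tallness makes a nonzero input direction $\nu$ alone certify a zero (tallness of $P_m$ from $P_c$, of $\tilde G$ from Assumption~\ref{assume:02}). The one genuine ingredient is the block-row collapse identity $A^j\zeta+\bigl(\sum_{k=0}^{j-1}A^k\bigr)B\nu=\zeta$ under $(I-A)\zeta=B\nu$: it is exactly what lets a zero direction of the fast single-rate plant $P_m$ lift verbatim to a zero direction of the lifted plant $\tilde G$, and, run backwards through the $X$-telescoping, what lets a zero direction of $\tilde G$ project back down to one of $P_m$.
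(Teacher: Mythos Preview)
Your proof is correct and follows essentially the same route as the paper's. For the ``only if'' direction you do exactly what the paper does: read off $C\xi+D\nu=0$ from the first block row and use the $X$-telescoping plus full column rank of $\mathcal O$ to obtain $(I-A)\xi-B\nu=0$. For the ``if'' direction the paper packages your block-row collapse identity into a single left pre-multiplication of the $P_m$ Rosenbrock pencil by the matrix (\ref{eq:19}), but unwinding that multiplication row by row is precisely your telescoping computation $A^j\zeta+\bigl(\sum_{k=0}^{j-1}A^k\bigr)(I-A)\zeta=\zeta$, so the arguments are the same in substance.
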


\begin{proof}
Suppose $\tilde{G}$ has a zero at $\lambda =1$. Then, there exist vectors $%
\xi $ and $\nu $, at least one of them nonzero, such that (\ref{eq:04}) and (%
\ref{eq:05}) hold for $\lambda _{0}=1$. In particular, from (\ref{eq:04}) we
get%
\begin{equation}
C\xi +D\nu =0.  \label{eq:20}
\end{equation}%
Furthermore, pre-multiplying (\ref{eq:04}) by $X$ results in $\mathcal{O}%
\left[ \left( I-A\right) \xi -B\nu \right] =0$ which in turn implies
\begin{equation}
\left( I-A\right) \xi -B\nu =0.  \label{eq:21}
\end{equation}%
(\ref{eq:20}) and (\ref{eq:21}) imply that $P_{m}$ has a zero at $\lambda =1$%
.

Conversely, if $P_{m}$ has a zero at $\lambda =1$,
\begin{equation}
\left[
\begin{array}{cc}
I-A & -B \\
C & D%
\end{array}%
\right] \left[
\begin{array}{c}
\xi \\
\nu%
\end{array}%
\right] =0,  \label{eq:18}
\end{equation}
for some $\xi $ and $\nu $. Pre-multiplying it by%
\begin{equation}
\left[
\begin{array}{cc}
\sum_{k=1}^{m-1}A^{k} & 0 \\
0 & I \\
-C & I \\
-C-CA & I \\
\vdots &  \\
-C\sum_{k=0}^{m-2}A^{k} & I%
\end{array}%
\right]  \label{eq:19}
\end{equation}%
gives $\left[
\begin{array}{cc}
I-\tilde{A} & -\tilde{B} \\
\tilde{C} & \tilde{D}%
\end{array}%
\right] \left[
\begin{array}{c}
\xi \\
\nu%
\end{array}%
\right] =0$. That is, $\tilde{G}$ has a zero at $\lambda =1$.
\end{proof}

\begin{proposition}
Let $P_{c}$ be \textquotedblleft fat." Then $\tilde{G}$ has always a zero at
$\lambda =1$.
\end{proposition}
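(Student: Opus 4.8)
The plan is to produce a nonzero right-null vector of the $\lambda=1$ system matrix of $\tilde G$ by first producing one for $P_m$ on purely dimensional grounds and then transporting it to $\tilde G$ via the block identity already used in Proposition \ref{prop:01}.

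First I would look at the $(n+n_y)\times(n+n_u)$ matrix $\left[\begin{array}{cc} I-A & -B \\ C & D\end{array}\right]$ of (\ref{eq:18}). Since $P_c$ is ``fat,'' $n_y<n_u$, so this matrix has strictly more columns than rows and hence a nontrivial kernel; I would pick any nonzero $\left[\begin{array}{c}\xi\\\nu\end{array}\right]$ in it, which then satisfies (\ref{eq:20})--(\ref{eq:21}) with $\lambda_0=1$, i.e.\ $P_m$ has a zero at $\lambda=1$. If one also wants $\nu\ne0$ (a genuine transmission zero rather than a decoupling one), note that $\nu=0$ would force $(I-A)\xi=0$ and $C\xi=0$, hence $\xi=0$ since $(A,C)$ is observable by minimality of $P_m$; so $\nu\ne0$ is automatic.

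Next I would take that same $\left[\begin{array}{c}\xi\\\nu\end{array}\right]$ and premultiply the identity (\ref{eq:18}) by the matrix in (\ref{eq:19}). Exactly as in the converse part of the proof of Proposition \ref{prop:01}, the product of the matrix in (\ref{eq:19}) with the system matrix of $P_m$ equals the system matrix $\left[\begin{array}{cc} I-\tilde A & -\tilde B \\ \tilde C & \tilde D\end{array}\right]$ of $\tilde G$, so I obtain $\left[\begin{array}{cc} I-\tilde A & -\tilde B \\ \tilde C & \tilde D\end{array}\right]\left[\begin{array}{c}\xi\\\nu\end{array}\right]=0$; this is precisely (\ref{eq:04})--(\ref{eq:05}) at $\lambda_0=1$, so $\tilde G$ has a zero at $\lambda=1$. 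Note that this direction never uses tallness of $P_c$, so it applies unchanged in the fat case.

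I do not expect a genuine obstacle: the whole content is that the $\lambda=1$ system matrix of a fat plant is a wide matrix and hence rank-deficient, together with the already-established lifting identity behind (\ref{eq:19}). The only steps needing a line of care are confirming $\nu\ne0$ via observability of $(A,C)$, and invoking---rather than re-deriving---that lifting identity. I would also remark, as in the discussion preceding the Theorem, that such a $\lambda=1$ zero cannot support a stealthy actuator attack, so the fat case introduces nothing new on that count.
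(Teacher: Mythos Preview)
Your proposal is correct and follows essentially the same approach as the paper: the dimensional argument that the $\lambda=1$ system matrix of $P_m$ is wide and hence has a nontrivial kernel, followed by the lifting identity (\ref{eq:19}) from the converse of Proposition \ref{prop:01}. Your additional observation that $\nu\ne0$ is forced by observability of $(A,C)$ is correct but not required for the statement as phrased, and your closing remark about the harmlessness of the $\lambda=1$ zero is a nice touch but likewise supplementary.
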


\begin{proof}
The proof relies on the fact that since $P_{c}$ or equivalently $P_{m}$ is
fat, there always exist two vectors $\xi $ and $\nu $ with at least one of
them nonzero such that (\ref{eq:18}) holds. Then, the rest of the proof
follows similarly to that of the converse part of Proposition \ref{prop:01}.
\end{proof}

\begin{remark}
We would like to point out that an equivalent way of obtaining the same
results, i.e., ability to detect zero attacks, is to hold the control input
longer rather than sampling the output faster. That is, if we consider a
dual rate system where the hold operates with a period of $mT$ while the
output is sampled with $T$, then the corresponding lifted system will enjoy
the same properties as before in terms of unstable zeros. Obviously, the
(nominal) controller performance will be reduced as the control is slower.
On the other hand, there is a potential benefit of lower cost of actuation
in this case.
\end{remark}

\section{Conclusion}

We presented a simple dual rate sampled data scheme which guarantees detectability of actuator and/or sensor attacks, if a secure output that maintains observability of the open loop modes  is available.  The main observation is that the sampled data nature in the implementation of the cybephysical system cannot be ignored as sampling can generate additional vulnerabilities due to the extra unstable zeros it may introduce, particularly if high rates are necessary to achieve certain performance level.  The proposed method takes care of this issue by the use of multirate sampling that ensures that  zeros exist only in harmless locations in the lifted domain.  We gave certain precise conditions on the detectability of stealthy attacks in terms of the open loop unstable poles and zeros and showed how the vulnerabilities can be eradicated by the use of the dual rate scheme. 

Several other possibilities can be studied in this context. The use of asynchronous sampling (e.g., \cite{vou94,toiv98}) can provide alternative ways to detect stealthy attacks; or even the network's random delays can be helpful in that respect; the speed of detecting however needs to be brought into consideration, even if the attack is detectable.  The methods of generalized holds \cite{kab} are also relevant as they move zeros, and with careful analysis of their robustness properties (e.g., \cite{freu,dulbook}) can provide acceptable and simple solutions as well.  All of these are subjects of current investigations by the authors and are documented in forthcoming publications.

\end{document}